\documentclass[
  aps,
  pra,
  twocolumn,
  superscriptaddress,
  nofootinbib,
  longbibliography
]{revtex4-2}

\usepackage{amsmath,amssymb,amsfonts,amsthm}
\usepackage{mathtools}
\usepackage{braket}
\usepackage{graphicx}
\usepackage{xcolor}
\usepackage{hyperref}
\usepackage{booktabs}
\usepackage{tikz}
\usepackage{tabularx}
\usepackage[ruled,vlined]{algorithm2e}
\DontPrintSemicolon

\newtheorem{theorem}{Theorem}
\newtheorem{lemma}{Lemma}
\newtheorem{proposition}{Proposition}

\newtheorem{definition}{Definition}

\newcommand{\Tr}{\operatorname{Tr}}
\newcommand{\supp}{\operatorname{supp}}
\newcommand{\Odd}{\operatorname{Odd}}

\newcommand{\Ftwo}{\mathbb{F}_2}
\newcommand{\ketbra}[1]{\ket{#1}\!\bra{#1}}

\newcommand{\SA}{\mathcal S_A}
\newcommand{\calS}{\mathcal S}

\begin{document}

\title{A five-qubit 1-resistant graph state and stabilizer marginal certificates}

\author{Zicheng Han}
\email{hzcqj2020@mail.ustc.edu.cn}
\affiliation{School of Mathematical Sciences, University of Science and Technology of China, Hefei 230026, China}

\author{Wanchen Zhang}
\email{wanchenz@mail.ustc.edu.cn}
\affiliation{Hefei National Laboratory, University of Science and Technology of China, Hefei 230088, China}

\author{Xiande Zhang}
\email{drzhangx@ustc.edu.cn}
\thanks{Corresponding author}
\affiliation{School of Mathematical Sciences, University of Science and Technology of China, Hefei 230026, China}
\affiliation{Hefei National Laboratory, University of Science and Technology of China, Hefei 230088, China}


\begin{abstract}
We study particle-loss resistant entanglement within the framework of
stabilizer and graph states. A pure state is \(m\)-resistant if it remains
entangled after the loss of any \(m\) particles and becomes fully separable
after the loss of any \(m+1\) particles. The smallest previously unresolved
qubit case was the existence of a five-qubit \(1\)-resistant pure state, which is resolved here by the five-cycle graph state \(\ket{C_5}\). A
stabilizer-subgroup method is also developed for verifying \(m\)-resistance
in graph states, using local stabilizers to certify full separability and
exact negative partial transpose~(NPT) witnesses to certify entanglement.


Applying this to all graph states associated with non-isomorphic graphs on
five, six, and seven vertices, we obtain a graph state classification up to local Clifford equivalence, which also classifies stabilizer states 
 up to local Clifford equivalence. 
Thus, the five-qubit \(1\)-resistant
stabilizer states are exactly the local Clifford class of \(C_5\).
Six-qubit \(2\)-resistant stabilizer states exist in three distinct
local Clifford classes, whereas no seven-qubit stabilizer state is \(m\)-resistant
for any nonzero admissible \(m\). Finally, we prove that the cycle
graph states \(\ket{C_N}\) with \(N\ge 7\) are not \(m\)-resistant for any
\(0\le m\le N-2\).
\end{abstract}

\maketitle

\section{Introduction}

Quantum entanglement is a fundamental feature that distinguishes quantum systems from classical ones and serves as a key resource for quantum information processing, quantum communication, and quantum computation \cite{Horodecki2009, GuhneToth2009, Steane1998}. It characterizes nonclassical correlations among subsystems, whose states cannot in general be described independently even when they are spatially separated. In multipartite systems, an important question is how such correlations behave when some particles are lost, measured, or become inaccessible \cite{VidalTarrach1999, Dur2001, Briegel2001,Rajagopal2002, Sugita2007, Brunner2012, Quinta2018,Neven2018, Quinta2019,Zhang2025}. Mathematically, particle loss is described by taking the partial trace, which maps the original multipartite state to reduced states of the remaining subsystems~\cite{NielsenChuang2010}. Depending on the initial entanglement structure, the resulting reduced states may either become separable or retain certain forms of entanglement \cite{VidalTarrach1999, Dur2001, Briegel2001, Sugita2007,Rajagopal2002, Sugita2007, Brunner2012, Balasubramanian2017, Quinta2018,Neven2018, Quinta2019,Luo2021, Zhang2025}. Understanding whether and how entanglement survives or transforms under subsystem loss is therefore a central problem in the study of multipartite entanglement \cite{Briegel2001,Brunner2012, Quinta2018,Neven2018, Quinta2019, Zhang2025}.

To characterize the robustness of multipartite entanglement under particle loss, Quinta et al.~\cite{Quinta2019} introduced the notion of \(m\)-resistant states. An \(N\)-particle entangled pure state is called \(m\)-resistant if it remains entangled after the loss of any \(m\) particles, while it becomes fully separable after the loss of any \(m+1\) particles. This notion captures a sharp transition between entanglement robustness and fragility. Quinta et al.~\cite{Quinta2018,Quinta2019} further connected this problem with the theory of cut-resistant topological links, where cutting link components plays a role analogous to losing particles in a multipartite quantum system. Based on this analogy, they proposed a feasible construction scheme for general \(m\)-resistant mixed states. For pure states, however, the construction problem is substantially more challenging: they conjectured that, for any \(N\)-particle system, there should exist an \(m\)-resistant pure state for some local dimension \(d\), and proved this conjecture for \(m\leq (N-2)/2\). More recently, Zhang et al.~\cite{Zhang2025} introduced the stronger notion of \(m\)-resistance, where the surviving entanglement after any \(m\)-particle loss is required to be genuinely multipartite~\cite{GuhneSeevinck2010}, and developed two general constructions of \(m\)-resistant pure states based on Dicke-state superpositions~\cite{Dicke1954} and classical error-correcting codes. In the qubit setting, the smallest unresolved case was the existence of a five-qubit \(1\)-resistant pure state. Existing general constructions cover \(m=0,2,3\) for \(N=5\), but not \(m=1\). Moreover, symmetric-state searches failed to find such a state, suggesting that it might not exist~\cite{Quinta2019, Zhang2025}.

In this work, we show that the five-cycle graph state resolves this case~\cite{Quinta2019,Zhang2025}. More
precisely, we prove that \(\ket{C_5}\) is a \(1\)-resistant five-qubit
pure state: all of its three-qubit marginals are fully separable, whereas all
of its four-qubit marginals are entangled.

We also develop a stabilizer-subgroup framework for verifying
\(m\)-resistance of graph states, building on the stabilizer and graph-state
formalism~\cite{Gottesman1997,Hein2004,Hein2006Review}. Full separability is certified by
local stabilizer structure, while entanglement is certified by exact NPT
witnesses, based on the positive partial transpose criterion
\cite{Peres1996,Horodecki1996}. This gives a certificate-based method, rather
than a floating-point numerical search.

The remainder of the paper applies this method to graph states of small size.
We first classify the five-vertex case, showing that the new \(1\)-resistant
example is unique up to local Clifford equivalence, or equivalently up to
local complementation of graphs~\cite{VanDenNest2004}. We then analyze the
six-vertex and seven-vertex cases, where the possible resistance parameters
exhibit a mixture of existence and nonexistence results. The \(m=0\) cases are
also included and reduce, within these graph-state classes, to the GHZ
local-complementation orbits~\cite{GHZ1989,Quinta2019}. Finally, we prove that the positive cycle
examples \(C_5\) and \(C_6\) do not extend to larger cycles: for every
\(N\ge 7\), \(\ket{C_N}\) is not \(m\)-resistant for
\(0\le m\le N-2\).

Although the enumeration is carried out in the graph-state representation, the
resulting classifications also have a stabilizer-state interpretation. Indeed,
every stabilizer state is local Clifford equivalent to some graph state~\cite{VanDenNest2004,Hein2006Review}, and
\(m\)-resistance is invariant under local unitary transformations. Therefore,
the graph-state classifications obtained in this paper lift directly to classifications
of stabilizer states up to local Clifford equivalence.
\section{Preliminaries}

\subsection{Particle loss and \(m\)-resistance}
For a positive integer \(N\), let $[N]:=\{1,2,\ldots,N\}.$
For \(J\subseteq [N]\), denote its complement by
$\bar J=[N]\setminus J.$
Given an \(N\)-partite pure state \(\ket{\psi}\), we write
\[
\rho_{\bar J}
=
\Tr_J \ketbra{\psi}
\]
for the reduced state obtained after tracing out the particles in \(J\).
We use the terms reduced state and marginal interchangeably.

A mixed state \(\rho\) on
$\mathcal H_1\otimes\cdots\otimes \mathcal H_N$
is called fully separable if it can be written as
\[
\rho
=
\sum_\alpha p_\alpha\,
\rho^{(\alpha)}_1\otimes
\rho^{(\alpha)}_2\otimes\cdots\otimes
\rho^{(\alpha)}_N,
\]
where \(p_\alpha\ge 0\), \(\sum_\alpha p_\alpha=1\), and each
\(\rho^{(\alpha)}_i\) is a one-particle density operator~\cite{guhne2011entanglement}. A state that is not
fully separable is called entangled in this paper.

\begin{definition}[\(m\)-resistance~\cite{Quinta2019}]
Let \(\ket{\psi}\) be an \(N\)-partite pure state. The state \(\ket{\psi}\) is
called \(m\)-resistant if:
\begin{enumerate}
    \item for every \(J\subseteq [N]\) with \(|J|=m\), the reduced state
    \(\rho_{\bar J}\) is entangled;
    \item for every \(J\subseteq [N]\) with \(|J|=m+1\), the reduced state
    \(\rho_{\bar J}\) is fully separable.
\end{enumerate}
\end{definition}

Throughout this paper, \(m\)-resistance refers to the above notion unless
otherwise specified.

For \(N=5\) and \(m=1\), the definition requires that every four-qubit
marginal is entangled, whereas every three-qubit marginal is fully separable.
This is the criterion used below to verify the \(1\)-resistance of the
five-cycle graph state.

\subsection{Graph states and stabilizers}

We first recall the stabilizer-state formalism
\cite{Gottesman1997,Hein2004,Hein2006Review}. Let \(X,Y,Z\) denote the
Pauli matrices, and let \(I\) be the identity operator. An \(N\)-qubit Pauli
operator is a tensor product
\[
P=P_1\otimes P_2\otimes\cdots\otimes P_N,
\qquad
P_i\in\{I,X,Y,Z\},
\]
up to an overall phase. The support of \(P\), denoted by
\(\operatorname{supp}(P)\), is the set of qubits on which \(P\) acts
nontrivially:
\[
\operatorname{supp}(P)=\{i\in [N]: P_i\neq I\}.
\]

A stabilizer state is a pure state uniquely specified as the common \(+1\)
eigenstate of an abelian subgroup of the \(N\)-qubit Pauli group. More
precisely, if \(g_1,\ldots,g_N\) are \(N\) independent commuting Pauli
operators, then the state \(\ket{\psi}\) satisfying
\[
g_j\ket{\psi}=\ket{\psi},
\qquad j=1,\ldots,N,
\]
is called a stabilizer state. The group
\[
\mathcal S=\langle g_1,\ldots,g_N\rangle
\]
is its stabilizer group. It contains \(2^N\) Pauli operators, and the
rank-one projector onto \(\ket{\psi}\) has the stabilizer expansion
\[
\ketbra{\psi}
=
2^{-N}\sum_{g\in\mathcal S}g.
\]

Graph states form an important class of stabilizer states~\cite{Hein2004,Hein2006Review}. Let \(G=(V,E)\) be a simple graph with \(|V|=N\).
When convenient we identify \(V\) with \([N]\). 
The graph state
\(\ket{G}\) is obtained by preparing each qubit in
\[
\ket{+}=\frac{\ket{0}+\ket{1}}{\sqrt 2}
\]
and applying a controlled-\(Z\) gate along each edge:
\[
\ket{G}
=
\prod_{\{u,v\}\in E} CZ_{uv}\ket{+}^{\otimes N}.
\]
Here \(CZ_{uv}\) denotes the controlled-\(Z\) gate acting on qubits \(u\) and
\(v\).

Equivalently, \(\ket{G}\) is the unique common \(+1\) eigenstate of the
operators~\cite{Hein2004,Hein2006Review}
\[
K_v
=
X_v\prod_{u\in N(v)} Z_u,
\qquad v\in V,
\]
where \(N(v)\) denotes the neighborhood of \(v\) in \(G\). Here \(X_v\) and
\(Z_u\) denote Pauli operators acting on the corresponding qubits, with the
identity acting on all other qubits.

The operators \(K_v\) commute with each other and are independent. Hence they
generate the stabilizer group of the graph state,
\[
\mathcal S(G)=\langle K_v:v\in V\rangle .
\]
Thus, by the general stabilizer expansion above,
\[
\ketbra{G}
=
2^{-N}\sum_{g\in\mathcal S(G)}g.
\]

\section{Stabilizer marginals and certificates}
We now record two sufficient certificates for reduced stabilizer states: one
for full separability and one for entanglement.

Let \(\ket{\psi}\) be an
\(N\)-qubit stabilizer state with stabilizer group \(\calS\). For
\(A\subseteq[N]\), define the local stabilizer subgroup
\[
\SA
=
\{g\in\calS:\supp(g)\subseteq A\}.
\]
For \(g\in\SA\), we denote by \(g|_A\) the restriction of the Pauli string
\(g\) to the tensor factors in \(A\). Equivalently, since \(g\) acts trivially
on \(\bar A\), we have
\[
g=g|_A\otimes I_{\bar A}.
\]

Using the stabilizer expansion~\cite{Gottesman1997}, taking the partial trace over
\(\bar A\) removes every Pauli string that is nontrivial on at least one
qubit in \(\bar A\), since
$\Tr X=\Tr Y=\Tr Z=0.$
Hence only the elements of \(\SA\) survive, and the reduced state on \(A\) is
\begin{equation}
\rho_A
=
\Tr_{\bar A}\ketbra{\psi}
=
2^{-|A|}
\sum_{g\in\SA}g|_A .
\label{eq:stabilizer-marginal}
\end{equation}
Thus, the study of stabilizer marginals is reduced to the study of the local
stabilizer subgroup \(\SA\).

\subsection{A separability certificate}

We first give a sufficient condition for full separability. This is the main separability certificate used in our graph-state search. This elementary certificate uses only the local Pauli structure of the stabilizer subgroup and follows the generator-based viewpoint of Refs.~\cite{Fattal2004, AudenaertPlenio2005}.

Throughout this paper, the dimension of a stabilizer subgroup means its
dimension as a vector space over \(\Ftwo\), or equivalently the number of
independent stabilizer generators.

\begin{lemma}[Sufficient condition for full separability]
\label{lem:local-diagonal-separable}
Let \(\ket{\psi}\) be an \(N\)-qubit stabilizer state, and let
\(\mathcal S_A\) be the local stabilizer subgroup supported on
\(A\). Suppose that, for every qubit \(j\in A\), all non-identity
single-qubit Pauli factors appearing at site \(j\) among the elements of
\(\mathcal S_A\) are of one fixed Pauli type. 
Then the reduced state \(\rho_A\) is fully separable.

In particular, if
\[
\dim_{\Ftwo}\SA\le 1,
\]
then \(\rho_A\) is fully separable.
\end{lemma}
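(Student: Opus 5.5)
The plan is to apply a local unitary that makes every element of \(\SA\) diagonal, and then to read off \(\rho_A\) as a convex combination of computational-basis product projectors. By hypothesis, each qubit \(j\in A\) carries a Pauli type \(P_j\in\{X,Y,Z\}\) such that every element \(g\in\SA\) has factor \(I\) or \(\pm P_j\) at site \(j\); if site \(j\) is always trivial, we simply set \(P_j=Z\). We choose a single-qubit unitary \(U_j\) with \(U_jP_jU_j^\dagger=Z\); a Hadamard or phase-type Clifford suffices. Then \(U=\bigotimes_{j\in A}U_j\) is a local unitary, and full separability is invariant under local unitaries. By Eq.~\eqref{eq:stabilizer-marginal},
\[
U\rho_AU^\dagger = 2^{-|A|}\sum_{g\in\SA} U g|_A U^\dagger ,
\]
and each term is \(\pm\) a tensor product of \(I\)'s and \(Z\)'s. (The overall phase of each element is real and equal to \(\pm1\), since the elements of \(\calS\) are Hermitian.) Hence \(U\rho_AU^\dagger\) is diagonal in the computational product basis \(\{\ket{x}:x\in\Ftwo^{A}\}\).

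The main step is to show that the diagonal entries form a probability distribution. This is automatic: \(U\rho_AU^\dagger\) is a density operator, so it is positive semidefinite with unit trace. Its diagonal entries \(p_x=\bra{x}U\rho_AU^\dagger\ket{x}\) are therefore nonnegative and sum to \(1\). We thus get
\[
U\rho_AU^\dagger=\sum_{x} p_x \bigotimes_{j\in A}\ketbra{x_j},
\]
which is manifestly fully separable. Conjugating back by \(U^\dagger\) gives \(\rho_A=\sum_x p_x\bigotimes_j U_j^\dagger\ketbra{x_j}U_j\), again a convex combination of product states. The only subtlety I expect is to make sure that the diagonal form holds. This needs only that the restricted Pauli factors commute site-by-site after conjugation. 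That follows because all nontrivial factors at a site are proportional to the same \(Z\). Consequently no positivity argument beyond \(\rho_A\ge0\) is required.

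For the ``in particular'' statement, note that if \(\dim_{\Ftwo}\SA=0\), then \(\SA=\{I\}\) and \(\rho_A=2^{-|A|}I\), which is trivially fully separable. If \(\dim_{\Ftwo}\SA=1\), then \(\SA=\{I,g\}\) for a single Pauli string \(g\). At each site \(j\), the only non-identity factor that can appear is \(g_j\), so the single-type hypothesis holds and the first part applies.
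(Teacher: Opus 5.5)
Your proof is correct and follows essentially the same route as the paper: you rotate each site so that its single Pauli type becomes diagonal, which makes $\rho_A$ diagonal in a product basis and hence a convex mixture of product projectors. You also make explicit the positivity and unit-trace step that the paper leaves implicit, and you handle the $\dim_{\Ftwo}\SA\le 1$ case by reducing it directly to the first part, whereas the paper additionally writes out an explicit decomposition into the product projectors $\Pi_j^{\pm}$.
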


\begin{proof}
By assumption, for each qubit \(j\in A\), there exists a single-qubit basis
in which all non-identity Pauli factors appearing at site \(j\) among the
elements of \(\SA\) are diagonal. Taking the tensor product of these
single-qubit bases gives a product basis on \(A\) in which every element of
\(\SA\) is diagonal. Hence, by Eq.~\eqref{eq:stabilizer-marginal}, the
reduced state \(\rho_A\) is diagonal in this product basis. Therefore
\(\rho_A\) is a classical mixture of product basis projectors, and is fully
separable.

It remains to explain the special case
\(\dim_{\mathbb F_2}\SA\le1\). If \(\SA=\{I\}\), then
$\rho_A=\frac{I}{2^{|A|}}$,
which is fully separable. If \(\dim_{\mathbb F_2}\SA=1\), then
$\SA=\{I,P\}$,
where
\[
P|_A=P_1\otimes\cdots\otimes P_r,
\qquad r=|A|,
\]
and each \(P_j\) is a single-qubit Pauli operator on the \(j\)th qubit of
\(A\). Then
\[
\rho_A
=
2^{-r}
\left(I+P_1\otimes\cdots\otimes P_r\right).
\]
Let
\[
\Pi_j^\pm=\frac{I\pm P_j}{2}.
\]
The following explicit decomposition holds:
\[
2^{-r}
\left(I+P_1\otimes\cdots\otimes P_r\right)
=
\frac{1}{2^{r-1}}
\sum_{\epsilon_1\cdots\epsilon_r=+1}
\Pi_1^{\epsilon_1}\otimes\cdots\otimes \Pi_r^{\epsilon_r}.
\]
The right-hand side is a convex combination of product projectors. Hence
\(\rho_A\) is fully separable.
\end{proof}

Lemma~\ref{lem:local-diagonal-separable} is a sufficient condition, not a
complete separability criterion. When the lemma applies, the corresponding
marginal is fully separable. When it does not apply, no
conclusion about separability or entanglement is drawn from this lemma alone.

\subsection{An NPT entanglement certificate}

We next describe the entanglement certificate used in our search. It is based
on the positive partial transpose criterion~\cite{Peres1996,Horodecki1996,TothGuhne2005Stabilizer}.

Let \(g_1,\ldots,g_s\) be independent generators of \(\SA\),  where
$s=\dim_{\Ftwo}\SA$.
For \(u=(u_1,\ldots,u_s)\in\Ftwo^s\), define
\[
g(u)=g_1^{u_1}\cdots g_s^{u_s}.
\]
Then the elements of \(\SA\) are precisely the operators \(g(u)\), with
\(u\in\Ftwo^s\). Thus Eq.~\eqref{eq:stabilizer-marginal} gives
\[
\rho_A
=
2^{-|A|}
\sum_{u\in\Ftwo^s}g(u).
\]

Let \(R\subseteq A\) be one side of a bipartition
$R\,|\,(A\setminus R)$.
We denote by \(\Gamma_R\) the partial-transpose map acting on the qubits in
\(R\), with respect to the computational basis. Thus, for an operator \(M\)
on the subsystem \(A\), \(M^{\Gamma_R}\) is obtained by transposing the tensor
factors in \(R\) and leaving those in \(A\setminus R\) unchanged.

Under \(\Gamma_R\), a Pauli string changes sign exactly when it contains an
odd number of \(Y\)'s on the transposed qubits, because
$X^\top=X, Z^\top=Z, Y^\top=-Y$.
For \(u\in\Ftwo^s\), define
\[
Y_R(u)
:=
\{j\in R:\ g(u)\text{ acts as }Y\text{ on qubit }j\}.
\]
Then
$g(u)^{\Gamma_R}
=
(-1)^{|Y_R(u)|}g(u)$.
For convenience, write
\[
\tau_R(u):=(-1)^{|Y_R(u)|}.
\]
Therefore
\begin{equation}
    \rho_A^{\Gamma_R}=
2^{-|A|}
\sum_{u\in\Ftwo^s}
\tau_R(u)g(u).
\label{eq:partial-transpose}
\end{equation}

Since \(g_1,\ldots,g_s\) commute, they can be simultaneously diagonalized.
A joint eigenspace is labelled by a character \(x\in\Ftwo^s\), on which
$g(u)\mapsto (-1)^{x\cdot u}$.
Consequently, all eigenvalues of \(\rho_A^{\Gamma_R}\) are
\begin{equation}
\lambda_x(R)
=
2^{-|A|}
\sum_{u\in\Ftwo^s}
\tau_R(u)(-1)^{x\cdot u},
\qquad x\in\Ftwo^s .
\label{eq:walsh-eigenvalues}
\end{equation}

Thus, the partial-transpose spectrum is obtained as the Walsh transform of
the sign function \(\tau_R\). This is a direct character diagonalization of a
stabilizer-diagonal operator. Similar sign-change calculations under partial
transposition are standard for graph-diagonal and GHZ-diagonal states
\cite{Kay2010GraphDiagonal,Kay2011}. In the present work, we use the
corresponding form for local stabilizer marginals.

If there exist \(R\subseteq A\) and \(x\in\Ftwo^s\) such that
\[
\lambda_x(R)<0,
\]
then \(\rho_A^{\Gamma_R}\) is not positive semidefinite. Hence \(\rho_A\) is
NPT across the bipartition \(R\ |\ (A\setminus R)\), and therefore \(\rho_A\) is
entangled. This is a sufficient certificate for entanglement. Failure to find
such a negative eigenvalue does not imply separability.

\section{Graph-state formulation}
\label{sec:graph-state-formulation}
We now specialize the above two certificates to graph states~\cite{Hein2004,Hein2006Review}. Let
\(G=(V,E)\) be a graph on \(N\) vertices. For each \(v\in V\), the graph-state
stabilizer generator is
\[
K_v
=
X_v\prod_{u\in N(v)}Z_u,
\]
where \(N(v)\) is the neighborhood of \(v\). For a subset \(U\subseteq V\),
define
\[
K_U=\prod_{v\in U}K_v.
\]
The graph-state stabilizer group is 
$\calS(G)=\{K_U:U\subseteq V\}$.

For \(U\subseteq V\), define the odd neighborhood
\[
\Odd_G(U)
:=
\{w\in V: |N(w)\cap U|\equiv 1 \pmod 2\}.
\]
The support of \(K_U\) is
\begin{equation}
\supp(K_U)=U\cup\Odd_G(U).
\label{eq:support-KU}
\end{equation}
Indeed, a vertex \(w\) receives an \(X_w\) factor precisely when \(w\in U\),
and it receives a \(Z_w\) factor once for each neighbor of \(w\) that lies in
\(U\). Hence a \(Z_w\) factor remains exactly when \(w\in\Odd_G(U)\).

For a subsystem \(A\subseteq V\), the stabilizer elements that
survive the partial trace over \(V\setminus A\) are precisely those satisfying
$\supp(K_U)\subseteq A$.
By Eq.~\eqref{eq:support-KU},  we define the binary index space
\begin{equation}
L_A
:=
\{U\subseteq V: U\cup \Odd_G(U)\subseteq A\}. \label{eq:LA}
\end{equation}
Hence, the local stabilizer subgroup of the graph-state marginal
on \(A\) is
\[
\mathcal S_A(G)
=
\{K_U:U\in L_A\}.
\]
Consequently, the marginal on \(A\) is
\begin{equation}
\rho_A
=
2^{-|A|}
\sum_{U\in L_A}K_U|_A.
\label{eq:graph-marginal}
\end{equation}

Eq.~\eqref{eq:graph-marginal} gives a concrete finite procedure for checking
\(m\)-resistance of graph states. 
More concretely, we check two requirements through the local
stabilizer subgroup \(\mathcal S_A(G)\):

\begin{enumerate}
    \item For each subset \(A\) with \(|A|=N-m-1\), compute
    \(\mathcal S_A(G)\). If \(\mathcal S_A(G)\) satisfies the local
    stabilizer separability criterion of
    Lemma~\ref{lem:local-diagonal-separable}, then \(\rho_A\) is certified
    fully separable. In particular,
    \[
   \dim_{\mathbb F_2}\mathcal S_A(G)= \dim_{\mathbb F_2}L_A\le 1
    \]
    is sufficient. Here, we identify subsets in  $L_A$ as binary vectors, and treat $L_A$ as a linear subspace of $\mathbb{F}_2^{N}$, since the mapping $U \mapsto \Odd_G(U)$ is linear over $\mathbb{F}_2$.

    \item For each subset \(A\) with \(|A|=N-m\), compute
    \(\mathcal S_A(G)\) and search over bipartitions
    \(R\,|\,(A\setminus R)\). If Eq.~\eqref{eq:walsh-eigenvalues} gives
    \[
    \lambda_x(R)<0,  \text{ for some } x\in\Ftwo^s
    \]
    then \(\rho_A\) is  entangled by the NPT criterion.
\end{enumerate}

Thus the graph-state verification is certificate-based: full separability is
proved by a local stabilizer certificate for \(\mathcal S_A(G)\), while
entanglement is proved by an exact NPT certificate computed from the same local
stabilizer subgroup. In the small graph-state classifications below, every
reported solution or failure is obtained from these two certificates.

\section{The five-cycle graph state is \(1\)-resistant}

Let $C_5$ be the five-cycle graph with vertices $\mathbb Z_5$ and edges
$(i,i+1), i\in\mathbb Z_5$.
The corresponding graph state is
\[
\ket{C_5}
=
\prod_{i=0}^4 CZ_{i,i+1}\ket{+}^{\otimes 5}.
\]
Its stabilizer generators are
\[
K_i=Z_{i-1}X_iZ_{i+1},
\qquad i\in\mathbb Z_5.
\]

\begin{theorem}
\label{thm:C5}
The five-cycle graph state $\ket{C_5}$ is a five-qubit  $1$-resistant pure state.
\end{theorem}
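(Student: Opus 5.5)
The plan is to verify the two defining conditions of $1$-resistance for $N=5$ separately, using the certificates of Section~\ref{sec:graph-state-formulation}. Every three-qubit marginal will be shown fully separable through Lemma~\ref{lem:local-diagonal-separable}, by checking $\dim_{\Ftwo}L_A\le 1$. Every four-qubit marginal will be shown entangled by exhibiting a negative eigenvalue $\lambda_x(R)<0$ in Eq.~\eqref{eq:walsh-eigenvalues}.

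First I will use symmetry to cut down the casework. The dihedral group $D_5$ acts on $C_5$ by graph automorphisms, and these permute qubits while preserving $\mathcal S(C_5)$. Marginals related by such a permutation are therefore related by a relabelling of qubits, so they are simultaneously separable or entangled. Up to $D_5$ there are exactly two $3$-subsets: the consecutive set $\{0,1,2\}$ and the set $\{0,1,3\}$. All $4$-subsets are equivalent, say to $A=\{0,1,2,3\}$.

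For the three-qubit marginals, I compute $L_A$ from Eq.~\eqref{eq:LA}. Since $U\subseteq\supp(K_U)$, only $U\subseteq A$ can occur. For $A=\{0,1,2\}$, $K_1=Z_0X_1Z_2$ lies in $\SA$. A direct check of the other nonempty $U\subseteq A$ shows that each has an odd-neighbourhood vertex outside $A$, so $\dim L_A=1$. For $A=\{0,1,3\}$, the product $K_0K_1K_3$ has support $\{0,1,3\}$, because the $Z_4$ factors cancel and so do the $Z_2$ factors. The remaining six nonempty $U$ each spill onto qubit $2$ or qubit $4$, so again $\dim L_A=1$. Lemma~\ref{lem:local-diagonal-separable} then gives full separability of all three-qubit marginals.

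For the four-qubit marginal on $A=\{0,1,2,3\}$, the constraint is that qubit $4$, whose neighbours are $0$ and $3$, must not lie in $\Odd(U)$. Hence $L_A=\{U\subseteq A:|U\cap\{0,3\}|\text{ even}\}$, which has dimension $3$. I take the generators
\[
g_1=K_1=Z_0X_1Z_2,\qquad g_2=K_2=Z_1X_2Z_3,\qquad g_3=K_0K_3=X_0Z_1Z_2X_3 .
\]
Choose the cut $R=\{0,1\}$. Up to phases, the products of these generators are
\[
g_1g_2\sim Z_0Y_1Y_2Z_3,\quad g_1g_3\sim Y_0Y_1X_3,\quad g_2g_3\sim X_0Y_2Y_3,\quad g_1g_2g_3\sim Y_0X_1X_2Y_3 .
\]
Counting $Y$'s on $R$ gives $\tau_R(u)=(-1)^{u_1u_2}$. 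The global signs of the stabilizer products only shift the character label $x$, so they do not affect whether a negative eigenvalue exists.

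The Walsh sum in Eq.~\eqref{eq:walsh-eigenvalues} then factorises. The sum over $u_3$ gives $2\delta_{x_3,0}$, and the sum over $(u_1,u_2)$ of $(-1)^{u_1u_2+x_1u_1+x_2u_2}$ is the bent-function transform, with values $\pm2$. This produces an eigenvalue $\lambda_x(R)=2^{-4}\cdot2\cdot(-2)=-1/4<0$ for a suitable $x$, so the marginal is NPT across $\{0,1\}\,|\,\{2,3\}$ and hence entangled.

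The main obstacle I expect is bookkeeping: getting the Pauli products and their $Y$ content right, and confirming that the signs from $XZ=-iY$ are harmless. The factorisation argument handles the signs cleanly, since they only relabel $x$.
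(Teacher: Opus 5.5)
Your proposal is correct and follows the paper's proof essentially step for step. It uses the same two dihedral orbits of three-qubit subsets with one-dimensional local stabilizer groups $\langle K_1\rangle$ and $\langle K_0K_1K_3\rangle$, and the same generators $K_1$, $K_2$, $K_0K_3$ for $\rho_{0123}$. The only difference is the cut: you transpose $R=\{0,1\}$, where $\tau_R(u)=(-1)^{u_1u_2}$ factorizes the Walsh sum and gives $\lambda_{(1,1,0)}(\{0,1\})=-1/4$, while the paper transposes qubit $0$ alone and finds $\lambda_{(1,0,1)}(\{0\})=-1/4$; both are valid NPT certificates. The sign bookkeeping you worried about is moot, because $g(u)$ is the actual group product and $\tau_R$ depends only on the $Y$-content.
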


\begin{proof}
We first show that every three-qubit marginal is fully separable.
By the dihedral symmetry of \(C_5\), there are two orbits of three-vertex
subsets, represented by
\[
A_1=\{0,1,2\},
\qquad
A_2=\{0,1,3\}.
\]
For these two representatives, the local stabilizer subgroups are
\[
\mathcal S_{A_1}(C_5)
=
\langle K_1\rangle
=
\langle Z_0X_1Z_2\rangle,
\]
and
\[
\mathcal S_{A_2}(C_5)
=
\langle K_0K_1K_3\rangle
=
\langle Y_0Y_1X_3\rangle.
\]
Thus, in both cases,
\[
\dim_{\mathbb F_2}\mathcal S_{A_i}(C_5)=1
\qquad (i=1,2).
\]
By Lemma~\ref{lem:local-diagonal-separable}, the corresponding marginals are
fully separable. Since every three-qubit marginal is equivalent to one of
these two cases by a graph automorphism of \(C_5\), all three-qubit marginals
are fully separable.

It remains to show that every four-qubit marginal is entangled.
By cyclic symmetry, it suffices to consider
\[
\rho_{0123}
=
\Tr_4\ketbra{C_5}.
\]
The stabilizer subgroup supported on $\{0,1,2,3\}$ is generated by
\[
\begin{aligned}
a &= Z_0X_1Z_2I_3,\\
b &= I_0Z_1X_2Z_3,\\
c &= X_0Z_1Z_2X_3.
\end{aligned}
\]

Therefore,
\[
\rho_{0123}
=
\frac1{16}(I+a)(I+b)(I+c).
\]
Applying the partial transpose on qubit \(0\), namely \(\Gamma_{\{0\}}\), and
using Eq.~\eqref{eq:partial-transpose}, we obtain
\[
\rho_{0123}^{\Gamma_{\{0\}}}
=
\frac1{16}
(I+a+b+c+ab-ac+bc-abc).
\]
Using Eq.~\eqref{eq:walsh-eigenvalues} with
$R=\{0\},$ and $
x=(1,0,1)\in\mathbb F_2^3,$
we obtain the eigenvalue
\[
\lambda_{(1,0,1)}(\{0\})
=
\frac1{16}
(1-1+1-1-1-1-1-1)
=
-\frac14.
\]
Thus \(\rho_{0123}^{\Gamma_{\{0\}}}\) is not positive semidefinite. Therefore
\(\rho_{0123}\) is NPT and hence entangled.

\end{proof}

\section{Small graph-state classification}

\label{sec:small-graph-classification}

We now apply the graph-state verification procedure of the previous section to
all graph states on five, six, and seven vertices. The results in this section
are finite exhaustive classifications within the class of graph states. They
should therefore be distinguished from statements about arbitrary pure states:
nonexistence within graph states does not exclude the possibility of
non-stabilizer-resistant pure states.

Before describing the enumeration, we record a simple invariance property that
justifies grouping graph-state solutions into orbits under local complementation.

\begin{proposition}[Local-unitary invariance of \(m\)-resistance]
\label{prop:lu-invariance}
The \(m\)-resistant property is invariant under local unitary operations. In particular,
it is invariant under local Clifford operations.
\end{proposition}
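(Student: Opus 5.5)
The plan is to show directly that a local unitary $W=W_1\otimes\cdots\otimes W_N$ commutes with taking marginals. Moreover, conjugation by a product unitary maps the sets of fully separable states and of entangled states bijectively onto themselves. Both conditions in the definition of $m$-resistance then transfer from $\ket{\psi}$ to $W\ket{\psi}$.

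\textbf{Step 1: compatibility with partial trace.} Fix $J\subseteq[N]$ and write $W=W_J\otimes W_{\bar J}$ with $W_J=\bigotimes_{i\in J}W_i$ and $W_{\bar J}=\bigotimes_{i\in\bar J}W_i$. Cyclicity of the partial trace over the traced-out factors gives
\[
\Tr_J\bigl(W\ketbra{\psi}W^\dagger\bigr)=W_{\bar J}\,\rho_{\bar J}\,W_{\bar J}^\dagger,
\]
because $\Tr_J\bigl((W_J\otimes I)M(W_J^\dagger\otimes I)\bigr)=\Tr_J M$. So the marginal of the transformed state is the local-unitary image of the original marginal.

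\textbf{Step 2: local unitaries preserve full separability.} If $\rho_{\bar J}=\sum_\alpha p_\alpha\bigotimes_{i\in\bar J}\rho_i^{(\alpha)}$, then
\[
W_{\bar J}\rho_{\bar J}W_{\bar J}^\dagger=\sum_\alpha p_\alpha\bigotimes_{i}W_i\rho_i^{(\alpha)}W_i^\dagger,
\]
which is again a convex combination of product density operators. The inverse $W_{\bar J}^\dagger$ is also a local unitary, so the same argument shows the converse. Hence $\rho_{\bar J}$ is fully separable if and only if its image is, and equivalently $\rho_{\bar J}$ is entangled if and only if its image is.

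\textbf{Step 3: conclusion.} Applying Steps 1 and 2 to every $J$ with $|J|=m$ and every $J$ with $|J|=m+1$ shows that $W\ket{\psi}$ satisfies both conditions of the definition exactly when $\ket{\psi}$ does. Local Clifford operations are local unitaries, so the Clifford statement follows. There is no real obstacle here. The only point needing care is Step 1, namely that the unitary factors on the traced-out qubits genuinely drop out of the marginal. This holds because the partial trace is invariant under conjugation by unitaries acting only on the traced subsystem.
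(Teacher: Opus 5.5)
Your proof is correct and follows essentially the same route as the paper. Marginals of the transformed state are local-unitary conjugates of the original marginals, and local unitaries, being invertible, preserve both full separability and its negation. You simply spell out in more detail what the paper states briefly: why the unitary factors on the traced-out qubits drop out, and the explicit product decomposition of the conjugated separable state.
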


\begin{proof}
Let
$U=U_1\otimes\cdots\otimes U_N$
be a local unitary and let
$\ket{\phi}=U\ket{\psi}.$
For every \(J\subseteq[N]\),
the reduced state of \(\ket{\phi}\) on
\(\bar J\) is
\[
\rho_{\bar J}^{\phi}
=
U_{\bar J}
\rho_{\bar J}^{\psi}
U_{\bar J}^{\dagger}.
\]
Thus each marginal of \(\ket{\phi}\) is locally unitarily equivalent to the
corresponding marginal of \(\ket{\psi}\).

Local unitaries preserve full separability, and since they are reversible,
they also preserve non-full-separability. Hence the two defining conditions of
\(m\)-resistance are unchanged under local unitaries. Local Clifford
operations are local unitaries, so \(m\)-resistance is also invariant under
local Clifford operations.
\end{proof}

For graph states, local Clifford equivalence is represented graph-theoretically
by local complementation together with graph isomorphism
\cite{VanDenNest2004,VanDenNest2004Algorithm}. Therefore, by
Proposition~\ref{prop:lu-invariance}, if one graph in a
local-complementation orbit gives an \(m\)-resistant graph state, then every
graph in the same orbit gives an \(m\)-resistant graph state. This is why the
classification below is organized by local-complementation orbits.

\subsection{Enumeration protocol}
\label{subsec:enumeration-protocol}

All enumerations were performed over non-isomorphic simple \(N\)-vertex graphs. For
\(N=5,6,7\), the numbers of unlabeled graphs checked were
\[
34,\qquad 156,\qquad 1044,
\]
respectively. Each graph was stored as a canonical isomorphism representative.
For each graph \(G\) and each admissible value of \(m\), we applied the
certificate test described in Sec.~\ref {sec:graph-state-formulation} to all
relevant marginals.

More explicitly, for every subsystem \(A\), we computed the corresponding local
stabilizer subgroup \(\mathcal S_A(G)\) from the index space \(L_A\) defined in
Eq.~\eqref{eq:LA}. Full separability was certified from the structure of
\(\mathcal S_A(G)\) using Lemma~\ref{lem:local-diagonal-separable}. Entanglement was certified by searching over partial transposes and applying the exact NPT criterion of Eq.~\eqref{eq:walsh-eigenvalues}.

A graph was accepted as \(m\)-resistant only if all required marginal
certificates were found. For rejected candidates, the computation records an
explicit failure certificate: either a marginal required to be fully separable
is certified NPT, or a marginal required to be entangled is certified fully
separable.

Local-complementation orbits were generated by repeatedly applying local
complementation at each vertex and reducing the resulting graphs to canonical
isomorphism representatives. For graph states, local Clifford equivalence is characterized by sequences of local complementations and graph isomorphisms~\cite{VanDenNest2004, VanDenNest2004Algorithm}.

The implementation details of the enumeration protocol and the results are provided in Appendix~\ref{appendixA}.

We first  mention the \(0\)-resistant case from our
classification results. It is known that the GHZ state is a standard example of a
\(0\)-resistant pure state~\cite{Quinta2019}. In the
graph-state setting, the \(N\)-qubit GHZ state is local Clifford equivalent to
the star graph state on \(K_{1, N-1}\). In fact, our enumeration shows that, when $N=5,6,7$, this known
example is in fact the only \(0\)-resistant class: every \(0\)-resistant
 graph state  lies in the local-complementation orbit of
\(K_{1,N-1}\). 

When  $N=5,6,7$, we present our results separately for $m=1,\ldots,N-2$.



\subsection{Enumeration of five-qubit graph state}
For \(N=5\), we consider 
$m=1,2,3.$
When  \(m=1\), the \(1\)-resistance
requires all four-qubit marginals to be entangled and all three-qubit
marginals to be fully separable. By Theorem~\ref{thm:C5}, the five-cycle graph
state \(|C_5\rangle\) satisfies exactly these conditions.

The exhaustive graph-state classification strengthens this example as follows.

\begin{proposition}
\label{prop:five-vertex-classification}
The \(1\)-resistant five-vertex graph states are exactly the
local-complementation orbit of \(C_5\). No five-vertex graph state is
\(m\)-resistant for \(m=2,3\).
\end{proposition}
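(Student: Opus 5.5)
The argument reduces the statement to finitely many checks on local-complementation orbits, and then handles each value of \(m\) separately.

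\emph{Reduction.} By Proposition~\ref{prop:lu-invariance} and the characterization of local Clifford equivalence by local complementation and isomorphism, \(m\)-resistance is constant on each local-complementation orbit. It therefore suffices to treat one representative per orbit among the connected five-vertex graphs. A disconnected graph gives a product state \(\ket{G_1}\otimes\ket{G_2}\). Tracing out all of one component but one vertex leaves a marginal that is still a product across the cut, and a single-vertex component gives a product factor \(\ket{+}\). A direct check then shows such states fail every \(m\ge 1\) condition; for example, after losing a single qubit that is isolated or part of an edge component, some size-\((N-m)\) marginal is fully separable.

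For connected graphs on five vertices there are exactly four local-complementation classes: the star \(K_{1,4}\) (GHZ), the path \(P_5\), the cycle \(C_5\), and the ``T''/bull class. Rather than rely on this from memory, I would recover it from the enumeration of the 21 connected graphs, grouped into orbits.

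\emph{The case \(m=1\).} Theorem~\ref{thm:C5} gives the inclusion of the \(C_5\) orbit. For the converse I will exhibit, for each other representative, a failure certificate.

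For the star \(K_{1,4}\), tracing out one leaf leaves a GHZ-type marginal on four qubits with \(\SA\) spanned by \(Z_iZ_j\) products. This marginal is diagonal and hence separable by Lemma~\ref{lem:local-diagonal-separable}, so 1-resistance fails.

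For the path \(P_5\) with vertices \(1\)–\(5\), tracing out vertex \(1\) leaves \(A=\{2,3,4,5\}\). Here \(L_A\) is spanned by \(K_3,K_4,K_5\), and the pattern is \(X\) on \(5\) and \(Z\) on \(4\) from \(K_5\). I expect either this or a three-qubit marginal to give a certificate. The cleanest route is to find a three-vertex set \(A\) with \(\dim L_A=2\) that fails to be separable. For instance, on \(\{3,4,5\}\), \(L_A\) contains \(K_4\) (\(Z_3X_4Z_5\)) and \(K_5\) (\(Z_4X_5\)), since \(N(5)=\{4\}\). This is a two-generator stabilizer marginal on three qubits, \(\tfrac18(I+Z_3X_4Z_5)(I+Z_4X_5)\). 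Its partial transpose on qubit \(5\) is unchanged because there is no \(Y\), so I would instead use an equivalent local rotation or a different \(R\). I expect the product \(Z_3Y_4Y_5\) to supply the sign flip under \(\Gamma_{\{5\}}\), giving a negative Walsh eigenvalue. This yields an NPT three-qubit marginal, so \(P_5\) fails.

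For the remaining class, I would carry out the same kind of search: find a 3-set with \(\dim L_A=2\) and a \(Y\)-flip producing a negative \(\lambda_x\), or find a 4-set with \(\dim L_A\le1\).

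\emph{The cases \(m=2,3\).} For \(m=3\), every single-qubit marginal of a connected graph state is \(I/2\). The two-qubit marginals would then all have to be fully separable while the one-qubit marginals are required to be entangled, which is impossible since one-qubit states are trivially "fully separable." So no graph state is 3-resistant.

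For \(m=2\), we need every three-qubit marginal entangled. For each class, I pick a 3-set with \(\dim L_A\le1\), which forces full separability by Lemma~\ref{lem:local-diagonal-separable}. For \(C_5\), both 3-set orbits have dimension \(1\), as shown in the proof of Theorem~\ref{thm:C5}. For the star, \(\{\text{leaf}_1,\text{leaf}_2,\text{leaf}_3\}\) gives a \(Z\)-diagonal marginal. For the path and the fourth class, I would locate a 3-set containing no full generator support, for example two vertices at distance \(\ge2\) plus one more.

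\emph{Main obstacle.} The main obstacle is being exhaustive: establishing the complete list of local-complementation orbits and producing an explicit certificate for every non-\(C_5\) class under \(m=1\). Here the NPT search may require choosing \(R\) carefully. I would do this computationally with the Walsh formula \eqref{eq:walsh-eigenvalues}, recording each certificate as in Appendix~\ref{appendixA}.
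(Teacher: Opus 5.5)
There is a genuine gap in your $m=3$ paragraph, and two smaller holes in the orbit-by-orbit argument.

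\textbf{The case $m=3$.} Your paragraph reverses the definition, so as written it does not exclude $3$-resistance. For $(N,m)=(5,3)$, losing any three qubits must leave an \emph{entangled} two-qubit marginal, and losing any four must leave a fully separable one-qubit marginal. The second condition holds automatically, so the fact that one-qubit marginals equal $I/2$ contradicts nothing.

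The missing step is to show that some two-qubit marginal is fully separable. For a connected graph on $N\ge 3$ vertices this holds for every pair $A=\{u,w\}$. If $\dim L_A=2$, then $\{u\},\{w\}\in L_A$, i.e.\ $N(u)\subseteq\{w\}$ and $N(w)\subseteq\{u\}$, which cuts $\{u,w\}$ off from the rest of the graph. Hence $\dim L_A\le 1$, and Lemma~\ref{lem:local-diagonal-separable} certifies $\rho_A$ fully separable. This is the obstruction the paper records for $m=3$.

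\textbf{The fourth LC orbit.} Your ``T''/bull class conflates two different orbits. Local complementation of the bull at its degree-two triangle vertex deletes the opposite triangle edge and yields $P_5$, so the bull lies in the $P_5$ orbit. The fourth connected orbit is the chair (fork) tree. It is not LC-equivalent to $P_5$: it has four pairs with $\dim L_A=1$ while $P_5$ has two, and this count is an LU invariant because $\dim L_A=|A|-S(\rho_A)$. Taking the bull as representative would check $P_5$ twice and never check the chair.

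The chair is easy to handle. With edges $12,23,34,35$:
\begin{itemize}
\item $\mathcal S_{\{1,2,3\}}=\langle X_1Z_2,\,Z_1X_2Z_3\rangle$. The product $Y_1Y_2Z_3$ flips sign under $\Gamma_{\{1\}}$, giving $\lambda_{(1,1)}(\{1\})=-1/4$. This rules out $m=1$.
\item $\mathcal S_{\{1,4,5\}}=\langle X_4X_5\rangle$ rules out $m=2$.
\end{itemize}
For $P_5$, $m=2$ is ruled out by $\mathcal S_{\{1,3,5\}}=\langle X_1X_3X_5\rangle$.

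Your intermediate remark that $\Gamma_{\{5\}}$ leaves the $\{3,4,5\}$ marginal of $P_5$ unchanged is also wrong, since $K_4K_5=Z_3Y_4Y_5$. You do end at the correct certificate.

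\textbf{Disconnected graphs.} This case is only asserted. Here is a complete argument for $\ket{\psi_1}\otimes\ket{\psi_2}$ on $V_1\sqcup V_2$ and any $m\le N-2$:
\begin{itemize}
\item Choose $|J|=m$ missing at least one vertex of each part. Then $\rho_{\bar J}$ factorizes, so one factor, say the one on $V_1\setminus J$, must be entangled.
\item Add to $J$ a vertex of $V_2\setminus J$. The resulting $(m+1)$-particle loss leaves a marginal that still contains that entangled tensor factor, so it is not fully separable. This contradicts the second condition of $m$-resistance.
\end{itemize}

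\textbf{Comparison with the paper.} With these fixes, your route is a valid, hand-checkable alternative to the paper's proof, which runs both certificates over all $34$ non-isomorphic graphs by computer. Your version additionally needs the citable fact that the connected five-vertex graphs form exactly four LC orbits: those of $K_{1,4}$, the chair, $P_5$ and $C_5$.
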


Thus, within graph states, the five-qubit \(1\)-resistant solution is unique
up to local Clifford equivalence. Up to graph isomorphism, the
local-complementation orbit of \(C_5\) contains three representatives:
\[
C_5,
\]
the graph obtained from \(C_5\) by adding one chord, and
\[
K_5\setminus P_4.
\]

For \(m=2\), one would require all three-qubit marginals to be entangled and
all two-qubit marginals to be fully separable. The enumeration shows that
every five-vertex graph state has at least one three-qubit marginal certified
to be fully separable, and hence cannot be \(2\)-resistant. For \(m=3\), all
two-qubit marginals would have to be entangled, while the one-qubit marginals
are automatically fully separable. However, every five-vertex graph state has
at least one two-qubit marginal certified to be fully separable. Therefore
\(3\)-resistance is also impossible within five-vertex graph states.

\subsection{Enumeration of six-qubit graph state}
\label{subsec:six-qubit-graph-states}

For \(N=6\), the admissible
resistance parameters are
$m=1,2,3,4.$
These resistance parameters exhibit a different pattern. The cases
\(m=1,3,4\) admit no graph-state solution, whereas the intermediate case
\(m=2\) admits graph-state solutions.

\begin{proposition}
\label{prop:six-nonexistence}
There are no six-vertex graph states that are \(m\)-resistant for
\(m=1,3,4\).
\end{proposition}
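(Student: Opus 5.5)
The approach is to reduce the statement to structural facts about graph states, supplemented by the exhaustive certificate enumeration of Sec.~\ref{subsec:enumeration-protocol} where a uniform argument is unavailable. By Proposition~\ref{prop:lu-invariance}, it suffices to check one representative per local-complementation orbit of six-vertex graphs. For each orbit I would exhibit an explicit \emph{failure certificate}: a marginal that must be fully separable but is certified NPT by Eq.~\eqref{eq:walsh-eigenvalues}, or a marginal that must be entangled but is certified fully separable by Lemma~\ref{lem:local-diagonal-separable}.

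\textbf{Case \(m=4\).} Here every two-qubit marginal must be entangled. I would first try a counting argument using Eq.~\eqref{eq:LA}. A two-qubit marginal \(\rho_{\{u,v\}}\) has \(\dim L_{\{u,v\}}\le 1\) unless the pair is maximally entangled, and by Lemma~\ref{lem:local-diagonal-separable} that marginal is separable whenever \(\dim L_{\{u,v\}}\le1\). Hence every pair would need \(\dim L_{\{u,v\}}=2\), i.e.\ \(\rho_{\{u,v\}}\) is a pure Bell state. By monogamy this is impossible for all pairs simultaneously once \(N\ge3\): qubit \(u\) cannot be maximally entangled with two distinct partners. This gives a clean, enumeration-free proof.

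\textbf{Case \(m=3\).} Every three-qubit marginal must be entangled while every two-qubit marginal is separable. The latter forces \(\dim L_{\{u,v\}}\le1\); a dimension-2 pair would be a pure Bell pair and hence entangled. I would then look for a three-set \(A\) with \(\dim L_A\le1\), which Lemma~\ref{lem:local-diagonal-separable} certifies as separable. Average counting over the \(\binom63=20\) triples should help here. If some triple always has \(\dim L_A\le 1\), we are done; otherwise I would fall back on the enumeration.

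\textbf{Case \(m=1\).} Every four-qubit marginal must separate under the Lemma, and every five-qubit marginal must be NPT. Since \(\dim L_A\ge |A|-(N-|A|)=2\) for \(|A|=4\), the dimension shortcut fails. One must instead use the Pauli-type condition of Lemma~\ref{lem:local-diagonal-separable}, or find a certified NPT four-qubit marginal. A four-qubit marginal with \(\dim L_A=2\) whose generators have mixed Pauli types at a site is typically NPT, as in the \(C_5\) computation.

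\textbf{Main obstacle.} I expect case \(m=1\) to be the obstacle, because there is no uniform structural argument. For it I would rely on the exhaustive run over the 156 graphs, grouped into local-complementation orbits. For each orbit I would record one four-qubit marginal \(A\) and one bipartition \(R\) with a negative \(\lambda_x(R)\). The proof then consists of tabulating these certificates, as in Appendix~\ref{appendixA}.
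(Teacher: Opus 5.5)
Your proposal is sound, and for $m=1$ and $m=3$ it ends up where the paper does. The paper gives no structural argument for this proposition. It rests entirely on the certificate-based run over all $156$ six-vertex graphs, each of which returned an explicit FAIL certificate.

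The genuinely different part is your $m=4$ argument, which is correct and better than enumeration. The chain is as follows. First, $\dim L_{\{u,v\}}\le 2$, and the case $\dim\le 1$ is separable by Lemma~\ref{lem:local-diagonal-separable}. Hence an entangled two-qubit marginal must have $\dim L_{\{u,v\}}=2$ and is therefore pure. A pure entangled $\rho_{uv}$ forces $\ket{\psi}=\ket{\phi}_{uv}\otimes\ket{\chi}$, so $\rho_{uw}=\rho_u\otimes\rho_w$ for any third qubit $w$. This argument needs no computer and proves more than is stated: no stabilizer state on $N\ge 3$ qubits is $(N-2)$-resistant. That explains the $(5,3)$, $(6,4)$ and $(7,5)$ entries of Table~\ref{tab:small-classification} in one stroke, and contrasts with the $W$ state, all of whose two-qubit marginals are entangled.

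Your reduction to one representative per local-complementation orbit is also legitimate and shrinks the search. It follows from Proposition~\ref{prop:lu-invariance} together with the permutation symmetry of the definition. It costs nothing here, because the paper reports a definite FAIL for every graph, so any representative carries a certificate.

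Some details need repair.

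\textbf{Dimension-two pairs.} A two-qubit marginal with $\dim L_{\{u,v\}}=2$ need not be a Bell pair: two isolated vertices give the pure product $\ket{++}$. So your remark for $m=3$ that separable two-qubit marginals force $\dim L_{\{u,v\}}\le 1$ is false, though you never use it.

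\textbf{Case $m=3$.} The hope of always finding a triple with $\dim L_A\le 1$ cannot succeed uniformly. For $K_{1,5}$, every triple has $\dim L_A=2$ (e.g.\ $\SA=\langle X_{\ell_1}Z_c,X_{\ell_2}Z_c\rangle$), and for the empty graph $\dim L_A=3$. Such orbits are rejected only through the Pauli-type condition of the Lemma.

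\textbf{Case $m=1$.} Here you conflate requirements with certificates. Resistance asks for separable four-qubit marginals and entangled five-qubit marginals. A refutation needs the \emph{opposite} certificates: an NPT four-qubit marginal, or a five-qubit marginal certified separable. Your closing plan, to record an NPT four-qubit marginal for every orbit, therefore fails for orbits such as the empty graph and $K_{1,5}$, whose four-qubit marginals are all separable. There the certificate must be a five-qubit marginal. For example, $K_{1,5}$ minus a leaf gives $\SA=\langle X_{\ell_i}Z_c\rangle_{i=1,\dots,4}$, which is separable by the Lemma.

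\textbf{Wrong prototype cited.} The $C_5$ four-qubit marginal has dimension $3$. The dimension-$2$ NPT example you want is the consecutive-four marginal in the proof of Proposition~\ref{prop:cycle-obstructions}.
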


The remaining case \(m=2\) is positive. Here resistance requires all
four-qubit marginals to be entangled and all three-qubit marginals to be fully
separable. This behavior is already known for the six-qubit AME state:
its three-qubit marginals are maximally mixed, while its four-qubit marginals
remain entangled~\cite{Helwig2012QSS, Helwig2013Graph, Zhang2025}. Our
enumeration shows that, within six-vertex graph states, the AME orbit is only
one of three possible local-complementation orbits.

\begin{proposition}
\label{prop:six-existence}
There are \(23\) non-isomorphic six-vertex \(2\)-resistant graph states.
They split into three local-complementation orbits, represented by
\(G_{\mathrm I}\), \(C_6\), and \(G_{\mathrm{AME}}\) in
Fig.~\ref{fig:six-representatives}.
\end{proposition}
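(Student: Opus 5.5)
The statement is a finite exhaustive claim, so I would prove it by combining the certificate procedure of Sec.~\ref{sec:graph-state-formulation} with an orbit computation. There are two parts: a sufficiency part, showing that every graph in the three orbits is \(2\)-resistant, and a completeness part, showing that no other six-vertex graph is.

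For sufficiency, Proposition~\ref{prop:lu-invariance} together with the local-complementation characterization of local Clifford equivalence means it is enough to certify one representative per orbit. For \(C_6\) I would imitate Theorem~\ref{thm:C5}. The dihedral symmetry reduces the three-vertex subsets to three orbits: consecutive \(\{0,1,2\}\), \(\{0,1,3\}\), and \(\{0,2,4\}\). For each, I would compute \(L_A\) from Eq.~\eqref{eq:LA}.

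\begin{itemize}
\item For \(\{0,1,2\}\), only \(K_1=Z_0X_1Z_2\) survives, so \(\dim L_A=1\).
\item For \(\{0,2,4\}\) and \(\{0,1,3\}\), I expect \(\dim L_A\le1\), and in the worst case a subgroup in which each site carries a single Pauli type.
\end{itemize}

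In either case Lemma~\ref{lem:local-diagonal-separable} gives full separability. For the four-vertex subsets, the orbits under \(D_6\) are \(\{0,1,2,3\}\), \(\{0,1,2,4\}\), and \(\{0,1,3,4\}\). For each I would list generators of \(\mathcal S_A\), which has dimension about \(2\) by the rank formula \(\dim L_A=|A|-\operatorname{rank}\) of the cut adjacency. I would then exhibit a single-qubit partial transpose \(R\) and a character \(x\) with \(\lambda_x(R)<0\) via Eq.~\eqref{eq:walsh-eigenvalues}.

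For \(G_{\mathrm{AME}}\), all three-qubit \(\mathcal S_A\) are trivial, so the marginals are maximally mixed. Its four-qubit marginals are rank-\(4\) projectors \(\frac1{16}\prod(I+g_i)\) with two generators, and an NPT certificate follows as above; this is also known from the literature. \(G_{\mathrm I}\) would be handled the same way, checking its \(\binom63\) and \(\binom64\) subsets up to automorphism.

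For completeness and the count \(23\), I would enumerate all \(156\) non-isomorphic six-vertex graphs as in Sec.~\ref{subsec:enumeration-protocol}. Each graph is either accepted, or rejected by an explicit failure certificate. The failure certificate is a three-vertex marginal certified NPT, a four-vertex marginal that Lemma~\ref{lem:local-diagonal-separable} certifies separable, or, for example, \(\dim L_A\ge2\) at three vertices with a negative Walsh eigenvalue. I would then generate local-complementation orbits by repeated local complementation and canonical relabeling. The final check is that the accepted graphs form exactly the union of the orbits of \(G_{\mathrm I}\), \(C_6\), and \(G_{\mathrm{AME}}\), with total size \(23\), and that these three orbits are pairwise distinct.

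The main obstacle is that the two certificates are only sufficient. Completeness requires every graph to be decided, meaning no graph whose three-qubit marginal fails Lemma~\ref{lem:local-diagonal-separable} yet is PPT. I would first check whether, for \(|A|=3\), every \(\mathcal S_A\) not covered by the lemma is NPT. For three qubits with a stabilizer subgroup of dimension \(\ge2\) containing mixed Pauli types on one site, I expect a direct Walsh computation to give a negative eigenvalue. Proving this as a small lemma would close any gap in the enumeration.
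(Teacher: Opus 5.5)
Your proposal is correct and follows essentially the same route as the paper, which justifies this proposition by the exhaustive certificate-based run of Algorithm~\ref{alg:graph-search} over the \(156\) six-vertex graphs followed by local-complementation orbit generation; your hand checks of \(C_6\), \(G_{\mathrm I}\), \(G_{\mathrm{AME}}\) are consistent with this (for instance, every four-vertex marginal of \(C_6\) has \(\dim\mathcal S_A=2\) and is NPT under a single-qubit transpose). The paper handles the gap you flag only empirically (no \textsc{UNKNOWN} cases occurred), but your proposed lemma in fact holds for every \(|A|\), not just \(|A|=3\): \(\tau_{\{j\}}(u)=(-1)^{q(u)}\) where \(q\) is a quadratic form whose polar form is the symplectic product at site \(j\), so if two elements of \(\mathcal S_A\) have anticommuting factors at \(j\), the Walsh spectrum in Eq.~\eqref{eq:walsh-eigenvalues} contains a negative value, and Lemma~\ref{lem:local-diagonal-separable} together with single-qubit NPT gives an exact dichotomy for stabilizer marginals.
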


Thus, the six-qubit AME graph-state orbit recovers a known \(2\)-resistant
construction~\cite{Zhang2025}, while the full enumeration exhibits two additional
local-complementation orbits. By Proposition~\ref{prop:lu-invariance}, every
graph in these three orbits is also \(2\)-resistant.

\begin{figure}[htbp]
\centering
\begin{tikzpicture}[
    scale=0.72,
    vtx/.style={circle, draw, fill=white, inner sep=0pt, minimum size=5.5pt},
    every path/.style={line width=0.45pt}
]

\begin{scope}[shift={(0,0)}]
\node[vtx] (a0) at (90:1.0) {};
\node[vtx] (a1) at (30:1.0) {};
\node[vtx] (a2) at (-30:1.0) {};
\node[vtx] (a3) at (-90:1.0) {};
\node[vtx] (a4) at (-150:1.0) {};
\node[vtx] (a5) at (150:1.0) {};

\draw (a0)--(a5);
\draw (a1)--(a5);
\draw (a0)--(a1);
\draw (a3)--(a0);
\draw (a4)--(a5);
\draw (a1)--(a2);

\node[draw=none, fill=none, rectangle] at (0,-1.55) {\small \(G_{\mathrm I}\)};
\end{scope}

\begin{scope}[shift={(3.2,0)}]
\node[vtx] (b0) at (90:1.0) {};
\node[vtx] (b1) at (30:1.0) {};
\node[vtx] (b2) at (-30:1.0) {};
\node[vtx] (b3) at (-90:1.0) {};
\node[vtx] (b4) at (-150:1.0) {};
\node[vtx] (b5) at (150:1.0) {};

\draw (b0)--(b1);
\draw (b1)--(b2);
\draw (b2)--(b3);
\draw (b3)--(b4);
\draw (b4)--(b5);
\draw (b5)--(b0);

\node[draw=none, fill=none, rectangle] at (0,-1.55) {\small \(C_6\)};
\end{scope}

\begin{scope}[shift={(6.4,0)}]
\node[vtx] (c0) at (90:1.0) {};
\node[vtx] (c1) at (30:1.0) {};
\node[vtx] (c2) at (-30:1.0) {};
\node[vtx] (c3) at (-90:1.0) {};
\node[vtx] (c4) at (-150:1.0) {};
\node[vtx] (c5) at (150:1.0) {};

\draw (c0)--(c1);
\draw (c0)--(c2);
\draw (c0)--(c3);
\draw (c1)--(c4);
\draw (c1)--(c5);
\draw (c2)--(c3);
\draw (c2)--(c5);
\draw (c3)--(c4);
\draw (c4)--(c5);

\node[draw=none, fill=none, rectangle] at (0,-1.55) {\small \(G_{\mathrm{AME}}\)};
\end{scope}

\end{tikzpicture}
\caption{
Unlabeled representatives of the three local-complementation orbits of
six-vertex \(2\)-resistant graph states. 
The right representative \(G_{\mathrm{AME}}\) is a
graph-state realization of the known six-qubit AME \(2\)-resistant state.
}
\label{fig:six-representatives}
\end{figure}

\subsection{Enumeration of seven-qubit graph state}

For \(N=7\), the admissible resistance parameters are
$m=1,2,3,4,5.$
For every nonzero resistance parameter $m$,
the exhaustive certificate-based enumeration finds no seven-vertex graph-state
solution. For each graph and each value of \(m\), the computation produces an
explicit obstruction: either a marginal required to be fully separable is
certified NPT, or a marginal required to be entangled is certified fully
separable.

\begin{proposition}
\label{prop:seven-vertex-classification}
No seven-vertex graph state is
\(m\)-resistant for \(m=1,2,3,4,5\).
\end{proposition}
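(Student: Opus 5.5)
Proposition~\ref{prop:seven-vertex-classification} is a finite statement, and I would prove it by running the exhaustive certificate enumeration of Sec.~\ref{subsec:enumeration-protocol}. I would organize the run by local-complementation orbits and not graph by graph. By Proposition~\ref{prop:lu-invariance} and the Van den Nest correspondence, $m$-resistance is constant on each orbit. So it suffices to take one representative from each local-complementation orbit among the $1044$ non-isomorphic seven-vertex graphs. The first step is to compute these orbits by repeated local complementation followed by canonical relabeling. Disconnected graphs can be discarded immediately for every $m\ge1$. If $G=G_1\sqcup G_2$, choose $J$ with $|J|=m$ so that some component survives with at least two vertices; its marginal is then a product across the cut, and one checks that either some required marginal is separable or, for large $m$, a two-qubit marginal split between the components is separable. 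In practice I would just let the certificate search handle these cases too.

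For each connected representative $G$ and each $m\in\{1,\dots,5\}$, the goal is an explicit \emph{obstruction}. There are two kinds. The first is a set $A$ with $|A|=7-m$ whose marginal is certified fully separable, for instance because $\dim_{\Ftwo}L_A\le1$ or the single-Pauli-type condition of Lemma~\ref{lem:local-diagonal-separable} holds. The second is a set $A$ with $|A|=6-m$ whose marginal is certified NPT, i.e.\ some $\lambda_x(R)<0$ in Eq.~\eqref{eq:walsh-eigenvalues}. For each $A$ I would compute $L_A$ as the kernel of the linear map $U\mapsto (U\cup\Odd_G(U))\cap\bar A$ over $\Ftwo$. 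Then I would read off the Pauli types of the generators $K_U|_A$, and evaluate the Walsh spectrum over all bipartitions $R\subseteq A$. All quantities are integers times $2^{-|A|}$, so the certificates are exact.

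Some cheap cases should clear most instances. For large $m$ ($m=4,5$, where one needs all $2$- or $3$-qubit marginals entangled), a counting argument helps. A two-qubit marginal of a graph state is entangled only if $\dim L_A=2$, meaning the pair forms a Bell pair factored out of the state. That is impossible for all pairs simultaneously in a connected seven-vertex graph, so some pair has $\dim L_A\le1$ and is separable. The analogous statement for $m=4$ uses three-qubit marginals. For $m=1$ one needs every six-qubit marginal entangled and every five-qubit marginal separable. Separability of a five-qubit marginal forces $\dim L_A$ small. Since $\dim L_A\ge |A|-|\bar A|=3$ for $|A|=5$, Lemma~\ref{lem:local-diagonal-separable} can only apply via the fixed-Pauli-type condition. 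For most graphs, then, I expect the obstruction to come from an NPT five-qubit marginal.

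The main obstacle is the possibility that for some graph and $m$ neither certificate type applies. That would be a required-separable marginal that is PPT but fails the local-diagonal condition, with every required-entangled marginal certified. If this occurs, I would first try the other kind of obstruction for that graph: a different $A$ of size $7-m$ that is certified separable, or an NPT marginal of size $6-m$. If neither exists, I would strengthen the separability test, for example by checking whether $\mathcal S_A$ splits as a product across a partition into locally diagonalizable blocks. The proof is complete once every (orbit, $m$) pair has a recorded obstruction, as listed in Appendix~\ref{appendixA}.
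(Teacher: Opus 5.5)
Your proposal matches the paper's proof. The paper proves this proposition only by running the certificate search of Algorithm~\ref{alg:graph-search} on all $1044$ seven-vertex graphs for each $m=1,\dots,5$, and it reports a FAIL certificate in every case with no UNKNOWN outcomes. These obstructions are recorded in the deposited certificate files, not in Appendix~\ref{appendixA}, which lists only existence cases. Your restriction to one representative per local-complementation orbit is a legitimate saving by Proposition~\ref{prop:lu-invariance}.

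Your shortcuts are not load-bearing, but parts of them are wrong. The ``analogous'' $m=4$ counting claim fails for $K_{1,6}$ and $K_7$: there every triple $A$ has $\dim_{\Ftwo}L_A=2$. Those graphs are rejected instead by the fixed-Pauli-type condition of Lemma~\ref{lem:local-diagonal-separable}, e.g.\ $\mathcal S_A=\langle X_{l_1}X_{l_2},X_{l_1}X_{l_3}\rangle$ when $A$ is three leaves. Likewise, separability of a five-qubit marginal does not force $\dim_{\Ftwo}L_A$ to be small, since the GHZ-orbit marginals have $\dim_{\Ftwo}L_A=4$.
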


\subsection{Summary and stabilizer-state interpretation}

The small graph-state classification is summarized in
Table~\ref{tab:small-classification}. The table lists the number of
non-isomorphic graphs checked, the number of graph-state solutions up to graph
isomorphism, and the number of local-complementation orbits.

\begin{table*}[htbp]
\centering
\caption{Small graph-state classification for \(m\)-resistance.
Here ``Checked'' denotes the number of non-isomorphic \(N\)-vertex graphs
examined.}
\label{tab:small-classification}
\begin{tabular}{c c c c l}
\toprule
\((N,m)\)
& Checked
& Solutions
& LC orbits
& Classification \\
\midrule
\((5,0)\) & \(34\) & \(2\) & \(1\)
& GHZ-type LC orbit of \(K_{1,4}\) \\
\((5,1)\) & \(34\) & \(3\) & \(1\)
& LC orbit of \(C_5\) \\
\((5,m),\ m=2,3\) & \(34\) & \(0\) & \(0\)
& No graph-state solution \\
\((6,0)\) & \(156\) & \(2\) & \(1\)
& GHZ-type LC orbit of \(K_{1,5}\) \\
\((6,1)\) & \(156\) & \(0\) & \(0\)
& No graph-state solution \\
\((6,2)\) & \(156\) & \(23\) & \(3\)
& LC orbits represented by \(G_{\mathrm I}\), \(C_6\), and \(G_{\rm AME}\) \\
\((6,m),\ m=3,4\) & \(156\) & \(0\) & \(0\)
& No graph-state solution \\
\((7,0)\) & \(1044\) & \(2\) & \(1\)
& GHZ-type LC orbit of \(K_{1,6}\) \\
\((7,m),\ m=1,\ldots,5\) & \(1044\) & \(0\) & \(0\)
& No graph-state solution \\
\bottomrule
\end{tabular}
\end{table*}

These graph-state results also have a direct stabilizer-state interpretation.
Indeed, every \(N\)-qubit stabilizer state is local Clifford equivalent to a
graph state~\cite{VanDenNest2004}. Since \(m\)-resistance is invariant under
local unitary operations by Proposition~\ref{prop:lu-invariance}, the
classification in Table~\ref{tab:small-classification} can equivalently be
read as a classification of stabilizer states up to local Clifford
equivalence.

\section{Cycle graph states of larger size}

The cycle graph states \(C_5\) and \(C_6\) play a special role in the small
graph-state classification above. The five-cycle graph state gives a
five-qubit  \(1\)-resistant state, while the six-cycle graph state
gives a six-qubit  \(2\)-resistant state. However, our exhaustive
search shows that the seven-cycle graph state does not yield an 
\(m\)-resistant state for any admissible nonzero value of \(m\). This naturally raises the question of whether larger cycle graph states \(C_N\) can be 
\(m\)-resistant for some \(m\). The following result shows that this is not
the case once the cycle is sufficiently large. Thus the positive cycle examples \(C_5\) and \(C_6\) do not extend to an
infinite family.

\begin{proposition}[Non-resistance of large cycle graph states]
\label{prop:cycle-obstructions}
For every \(N\ge 7\), the cycle graph state \(\ket{C_N}\) is not
\(m\)-resistant for any
$0\le m\le N-2.$
\end{proposition}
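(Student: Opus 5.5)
The plan is to produce, for each admissible \(m\), one of two explicit certificates. Either some marginal of size \(N-m-1\) is certified NPT, which violates condition 2 of the definition of \(m\)-resistance, or some marginal of size \(N-m\) is certified fully separable, which violates condition 1. Both certificates come from the local stabilizer subgroups \(\mathcal S_A(C_N)\) of Sec.~\ref{sec:graph-state-formulation}. The case \(N=7\) is already covered by Proposition~\ref{prop:seven-vertex-classification} for \(m\ge1\) and by the \((7,0)\) row of Table~\ref{tab:small-classification} for \(m=0\). I would nevertheless try to run the general argument uniformly for \(N\ge7\) where it applies.

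\emph{Step 1: small \(m\), using a contiguous four-block.} Label the vertices by \(\mathbb Z_N\) and let \(B=\{0,1,2,3\}\). Since \(N\ge7\), the block does not wrap around the cycle, so \(\mathcal S_B\) contains
\[
a=K_1=Z_0X_1Z_2,\qquad b=K_2=Z_1X_2Z_3,\qquad ab=Z_0Y_1Y_2Z_3 .
\]
A short check of \(L_B\) should show these are all the elements: any \(U\) touching an endpoint of \(B\) puts \(-1\) or \(4\) into \(\Odd(U)\). Hence \(\rho_B=\frac1{16}(I+a)(I+b)\). Take the partial transpose \(\Gamma_{\{1\}}\). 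Only \(ab\) flips sign, so Eq.~\eqref{eq:walsh-eigenvalues} with \(x=(1,1)\) gives
\[
\lambda=\tfrac1{16}(1-1-1-1)=-\tfrac18<0,
\]
and \(\rho_B\) is NPT. Now suppose \(N-m-1\ge4\), i.e.\ \(m\le N-5\). Take the contiguous block \(A\supseteq B\) with \(|A|=N-m-1\). Then \(\rho_B\) is a marginal of \(\rho_A\), and a fully separable state has fully separable marginals. So \(\rho_A\) is entangled and condition 2 fails. This covers \(0\le m\le N-5\).

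\emph{Step 2: large \(m\), using spread-out sets.} The remaining cases are \(m\in\{N-4,N-3,N-2\}\), where condition 1 requires every marginal of size \(4\), \(3\), \(2\) respectively to be entangled. I would use the following combinatorial lemma.

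If \(A\) is a union of cyclic blocks of length at most \(2\), and any two consecutive blocks are separated by at least two vertices outside \(A\), then \(L_A=\{\emptyset\}\).

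To prove it, suppose \(U\in L_A\) is nonempty. Since \(U\subseteq A\), pick \(v\in U\) at the left end of its block within \(U\), so that \(v-1\notin U\). Then \(v-1\in\Odd(U)\) unless \(v-2\in U\). If \(v-1\notin A\), then \(v-2\) is also a gap vertex, hence not in \(U\), so \(v-1\in\Odd(U)\setminus A\), a contradiction. If instead \(v-1\in A\) (the other vertex of a two-block) but \(v-1\notin U\), then I would run the argument on the other side, or on the block's outer neighbor. This small case analysis is the step I expect to need the most care to write cleanly.

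Given the lemma, \(\rho_A=I/2^{|A|}\), which is fully separable by Lemma~\ref{lem:local-diagonal-separable}. For \(N\ge8\) the concrete choices are
\[
\{0,1\},\qquad \{0,1,4\},\qquad \{0,1,4,5\}.
\]
The gaps are \(2\) and \(N-5\ge3\) for the second set, and \(2\) and \(N-6\ge2\) for the third. These kill \(m=N-2,\,N-3,\,N-4\) respectively. For \(N=7\), the four-set \(\{0,1,4,5\}\) does not fit the lemma, which is why I would defer that single case (\(m=3\)) to Proposition~\ref{prop:seven-vertex-classification}. Alternatively, one could compute \(\mathcal S_A\) for \(A=\{0,1,3,5\}\) directly and try to show \(\dim\le1\).

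Combining Steps 1 and 2 covers every \(0\le m\le N-2\), which proves the proposition.
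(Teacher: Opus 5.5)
Your proof is correct and follows the paper's route. The paper also rules out $k=N-m-1\ge 4$ with the NPT certificate for $\rho_{\{0,1,2,3\}}$ under $\Gamma_{\{1\}}$ (eigenvalue $-1/8$) plus inheritance of full separability under partial trace, and for $k\le 3$ it exhibits a fully separable $(k+1)$-qubit marginal. The only difference is your witness sets $\{0,1\},\{0,1,4\},\{0,1,4,5\}$, which are maximally mixed by your gap lemma (its two-case proof does go through), where the paper uses $\{0,2\},\{0,1,2\},\{0,2,4,6\}$ with $\dim\mathcal S_B\le 1$. Like the paper, you may defer $N=7$ to the enumeration, but your suggested $A=\{0,1,3,5\}$ already settles the remaining $N=7$, $m=3$ case: the gap vertices $2,4,6$ force $L_A=\{\emptyset,\{0,1,3,5\}\}$, so $\mathcal S_A=\langle Y_0Y_1X_3X_5\rangle$ and Lemma~\ref{lem:local-diagonal-separable} applies.
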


\begin{proof} 
It suffices to consider \(N\ge 8\), since the case \(N=7\) follows from the exhaustive classification above.
Label the vertices of \(C_N\) by \(\mathbb Z_N\), and write the graph-state
stabilizer generators as
\[
K_i=Z_{i-1}X_iZ_{i+1},
\qquad i\in\mathbb Z_N .
\]
Suppose that \(\ket{C_N}\) were \(m\)-resistant, and set
\[
k=N-m-1.
\]
Then \(1\le k\le N-1\), and \(m\)-resistance requires every \(k\)-qubit
marginal to be fully separable and every \((k+1)\)-qubit marginal to be
entangled.

We first rule out the case \(k\ge4\). Consider the four consecutive vertices
$B=\{0,1,2,3\}.$
The local stabilizer subgroup supported on \(B\) is generated by
\[
a=Z_0X_1Z_2I_3,
\qquad
b=I_0Z_1X_2Z_3 .
\]
Thus
\[
\rho_B
=
\frac{1}{16}(I+a)(I+b).
\]
Applying the partial transpose on qubit \(1\), namely \(\Gamma_{\{1\}}\), and
using Eq.~\eqref{eq:partial-transpose}, only the term \(ab\)
changes sign. 
Hence
\[
\rho_B^{\Gamma_{\{1\}}}
=
\frac{1}{16}(I+a+b-ab).
\]
Equivalently, using the Walsh eigenvalue formula
Eq.~\eqref{eq:walsh-eigenvalues} with
$R=\{1\},$ and $
x=(1,1)\in\mathbb F_2^2,$
we obtain
\[
\lambda_{(1,1)}(\{1\})
=
\frac{1}{16}(1-1-1-1)
=
-\frac18 .
\]
Therefore \(\rho_B^{\Gamma_{\{1\}}}\) is not positive semidefinite, so
\(\rho_B\) is NPT and hence entangled.

If \(k\ge4\), choose a \(k\)-element subset \(A\) such that
\[
B\subseteq A.
\]
Since \(\ket{C_N}\) is assumed to be \(m\)-resistant, the marginal
\(\rho_A\) should be fully separable. But every marginal of a fully separable
state is fully separable, so
\[
\rho_B=\Tr_{A\setminus B}\rho_A
\]
would also be fully separable. This contradicts the entanglement of
\(\rho_B\). Hence no \(m\)-resistance is possible when \(k\ge4\).

It remains to consider \(k=3,2,1\). In each case, we exhibit a fully separable
\((k+1)\)-qubit marginal, contradicting the requirement that all
\((k+1)\)-qubit marginals be entangled.

For \(k=3\), take
$B=\{0,2,4,6\}.$
If \(N>8\), no nontrivial stabilizer element is supported on \(B\), and hence
\[
\rho_B=\frac{I}{16}.
\]
If \(N=8\), the only nontrivial stabilizer element supported on \(B\) is
generated by \(K_0K_2K_4K_6\). Thus, in both cases,
\[
\dim_{\mathbb F_2}\mathcal S_B(C_N)\le1.
\]
By Lemma~\ref{lem:local-diagonal-separable}, \(\rho_B\) is fully separable.
This contradicts the requirement that every four-qubit marginal be entangled.

For \(k=2\), take the three consecutive vertices
$B=\{0,1,2\}.$
The only nontrivial stabilizer element supported on \(B\) is
\[
K_1=Z_0X_1Z_2.
\]
Hence
\[
\dim_{\mathbb F_2}\mathcal S_B(C_N)=1,
\]
and Lemma~\ref{lem:local-diagonal-separable} implies that \(\rho_B\) is fully
separable. This contradicts the requirement that every three-qubit marginal
be entangled.

For \(k=1\), take two non-adjacent vertices, for example
$B=\{0,2\}.$
For \(N\ge8\), no nontrivial stabilizer element is supported on \(B\). Hence
\[
\rho_B=\frac{I}{4},
\]
which is fully separable. This contradicts the requirement that every
two-qubit marginal be entangled.

Thus, for every possible value \(1\le k\le N-1\), one of the two defining
requirements of \(m\)-resistance is violated. Therefore \(\ket{C_N}\) is not
\(m\)-resistant for any
$0\le m\le N-2.$
\end{proof}

\section{Discussion}

We have shown that the five-cycle graph state is a five-qubit
\(1\)-resistant pure state. This resolves the five-qubit case by a simple
stabilizer-state construction. More generally, we developed a
stabilizer-subgroup framework for verifying \(m\)-resistance of graph states
and applied it to small graph states. Within graph states, the five-qubit
\(1\)-resistant solutions form exactly the local-complementation orbit of
\(C_5\); six-qubit \(2\)-resistant graph states exist and split into three
local-complementation orbits; and no seven-vertex graph state is
\(m\)-resistant for \(m=1,\ldots,5\). We also showed that the positive cycle
examples \(C_5\) and \(C_6\) do not extend to an infinite family, that is,
\(\ket{C_N}\) is not \(m\)-resistant for any \(0\le m\le N-2\) when
\(N\ge 7\).

Several open problems remain. First, do \(C_5\) and \(C_6\) give strongly \(m\)-resistant graph states for \(m=1\) and \(m=2\), respectively? Here, ``strong'' means genuine multipartite entanglement rather than mere entanglement~\cite{Zhang2025,GuhneSeevinck2010}. Second, can non-stabilizer pure states realize resistance parameters that are ruled out within graph states, especially in the six- and seven-qubit cases?

Finally, we mention that the nonexistence results in this paper should be understood as no-go results
within the graph-state and stabilizer-state settings, rather than as a
nonexistence result for all pure states. For example, for \(4\)- and
\(5\)-resistance, seven-qubit pure states are known to exist~\cite{Zhang2025}. This shows that graph states do not capture all possible
particle-loss entanglement structures of general pure states.

\begin{acknowledgments}
The research of Xiande Zhang is supported by the National Key Research and Development Program of China 2023YFA1010200, the NSFC under Grants No. 12171452 and No.12231014, and the Quantum Science and Technology-National Science and Technology Major Project 2021ZD0302902. The research of Wanchen Zhang is supported by the Innovation Program for Quantum Science and Technology under grant no. 2021ZD0301602.
\end{acknowledgments}

\section*{DATA AVAILABILITY}

The source code and certificate data supporting the small graph-state classifications in this work are publicly available on Zenodo at DOI: \href{https://doi.org/10.5281/zenodo.20476315}{10.5281/zenodo.20476315}. The archive contains the Python enumeration script, human-readable classification outputs, and machine-readable JSON certificate files for all pairs \((N,m)\) with \(N=5,6,7\) considered in Table~\ref{tab:small-classification}. For each graph, the certificate files record the NetworkX graph-atlas identifier, edge set, resistance status, marginal certificates, and local-complementation orbit information when applicable. For every pair \((N,m)\), the enumeration returned either \textsc{RESISTANT} or \textsc{FAIL} for each graph; no \textsc{UNKNOWN} cases occurred.

\appendix
\section{Enumeration protocol and certificate data}
\label{appendixA}

\subsubsection{The implementation details}
We enumerate non-isomorphic \(N\)-vertex simple graphs for
\(N=5,6,7\) using NetworkX~\cite{NetworkX2008}. In what follows, we use the
corresponding NetworkX graph-atlas labels, such as \(\texttt{atlas\_14}\), to
identify the enumerated graph representatives. These labels are used only as
implementation identifiers; the graph itself is specified by its edge set, so
the classification is independent of the particular labeling convention used
by NetworkX.

The NetworkX enumeration only provides the list of graph representatives.
For each representative \(G\), the resistance property is then tested by the
certificate procedure developed in the main text. More precisely, for each
candidate graph we examine two families of marginals: the marginals of size
\(N-m-1\), which should be fully separable, and the marginals of size
\(N-m\), which should be entangled. The first family is checked using the
local stabilizer separability criterion, while the second family is checked
using the exact NPT certificate obtained from the Walsh-transform formula. If all
required marginals are certified, the graph is accepted; if one required
condition is contradicted by the opposite certificate, the graph is rejected;
otherwise the result is marked as inconclusive. The procedure is summarized
in Algorithm~\ref{alg:graph-search}.

No \textsc{UNKNOWN} cases occurred: for every candidate graph, either the \(m\)-resistance property was certified by the corresponding positive certificate, or it was explicitly rejected by a failure certificate.

\begin{algorithm}[htbp]
\caption{Certificate-based graph-state search}
\label{alg:graph-search}
\KwIn{Number of vertices \(N\), resistance parameter \(m\)}
\KwOut{Graph states certified to be  \(m\)-resistant}

Set
\[
s_{\rm sep}=N-m-1,\qquad s_{\rm ent}=N-m .
\]

\ForEach{non-isomorphic graph \(G\) on \(N\) vertices}{
    Set \(\mathrm{status}(G)\leftarrow \mathrm{RESISTANT}\)\;

    \ForEach{subset \(A\subseteq V(G)\) with \(|A|=s_{\rm sep}\)}{
        Compute the local stabilizer subgroup
        \[
        \mathcal S_A(G)=\{K_U:\supp(K_U)\subseteq A\}.
        \]
        \If{\(\rho_A\) is certified fully separable by the local stabilizer criterion Lemma~\ref{lem:local-diagonal-separable}}{
            continue\;
        }
        \If{\(\rho_A\) has an NPT certificate}{
            Set \(\mathrm{status}(G)\leftarrow \mathrm{FAIL}\)\;
            break\;
        }
        Set \(\mathrm{status}(G)\leftarrow \mathrm{UNKNOWN}\)\;
        break\;
    }

    \If{\(\mathrm{status}(G)\neq \mathrm{RESISTANT}\)}{
        continue\;
    }

    \ForEach{subset \(A\subseteq V(G)\) with \(|A|=s_{\rm ent}\)}{
        Compute the local stabilizer subgroup
        \[
        \mathcal S_A(G)=\{K_U:\supp(K_U)\subseteq A\}.
        \]
        \If{\(\rho_A\) is certified fully separable by the local stabilizer criterion}{
            Set \(\mathrm{status}(G)\leftarrow \mathrm{FAIL}\)\;
            break\;
        }
        \If{\(\rho_A\) has an NPT certificate}{
            continue\;
        }
        Set \(\mathrm{status}(G)\leftarrow \mathrm{UNKNOWN}\)\;
        break\;
    }

    \If{\(\mathrm{status}(G)=\mathrm{RESISTANT}\)}{
        Record \(G\) as an \(m\)-resistant graph state\;
    }
}
\end{algorithm}

\subsubsection{Existence cases and local-complementation orbits}

We list only the existence cases. For each pair \((N,m)\), we give the
NetworkX atlas representatives of the \(m\)-resistant graph states and their
decomposition into local-complementation orbits.

\paragraph{\((N,m)=(5,0)\).}
There are two non-isomorphic \(0\)-resistant graph states. They form one
local-complementation orbit:
\[
\mathcal O_{5,0}^{(1)}
=
\{\texttt{atlas\_29},\texttt{atlas\_52}\}.
\]

\paragraph{\((N,m)=(5,1)\).}
There are three non-isomorphic \(1\)-resistant graph states. They form one
local-complementation orbit:
\[
\mathcal O_{5,1}^{(1)}
=
\{\texttt{atlas\_38},\texttt{atlas\_43},\texttt{atlas\_47}\}.
\]

\paragraph{\((N,m)=(6,0)\).}
There are two non-isomorphic \(0\)-resistant graph states. They form one
local-complementation orbit:
\[
\mathcal O_{6,0}^{(1)}
=
\{\texttt{atlas\_77},\texttt{atlas\_208}\}.
\]

\paragraph{\((N,m)=(6,2)\).}
There are \(23\) non-isomorphic \(2\)-resistant graph states. They split into
three local-complementation orbits:
\[
\begin{aligned}
\mathcal O_{6,2}^{(1)}
=
\{&
\texttt{atlas\_105},\texttt{atlas\_127},\texttt{atlas\_128},\\
&
\texttt{atlas\_147},\texttt{atlas\_148},\texttt{atlas\_151},\\
&
\texttt{atlas\_152},\texttt{atlas\_163},\texttt{atlas\_164},\\
&
\texttt{atlas\_166},\texttt{atlas\_167},\texttt{atlas\_171},\\
&
\texttt{atlas\_180},\texttt{atlas\_184},\texttt{atlas\_188},\\
&
\texttt{atlas\_196}
\},
\end{aligned}
\]
\[
\begin{aligned}
\mathcal O_{6,2}^{(2)}
=
\{&
\texttt{atlas\_94},\texttt{atlas\_115},\texttt{atlas\_143}, \\
&
\texttt{atlas\_158},\texttt{atlas\_195}
\},
\end{aligned}
\]
and
\[
\mathcal O_{6,2}^{(3)}
=
\{\texttt{atlas\_174},\texttt{atlas\_187}\}.
\]

\paragraph{\((N,m)=(7,0)\).}
There are two non-isomorphic \(0\)-resistant graph states. They form one
local-complementation orbit:
\[
\mathcal O_{7,0}^{(1)}
=
\{\texttt{atlas\_270},\texttt{atlas\_1252}\}.
\]


\bibliographystyle{apsrev4-2}
\bibliography{references}

\end{document}